\newtheorem{theo}{Theorem}
\newtheorem{defi}{Definition}
\newtheorem{prop}{Proposition}
\newtheorem{remark}{Remark}
\newtheorem{cor}{Corollary}
\DeclareMathOperator*{\etaM}{\eta_{\sf MatII}}
\DeclareMathOperator*{\etaGM}{\eta_{\sf SSCC}}
\DeclareMathOperator*{\Hpi}{H_{\pi}}
\DeclareMathOperator*{\Hpii}{H_{\pi,i}}
\DeclareMathOperator*{\HpiI}{H_{\pi,\mathcal{I}}}
\DeclareMathOperator*{\Rsp}{R_{\sf Sph}} 
\DeclareMathOperator*{\Rdd}{R_{\sf c}} 
\newcommand{\etapi}{\eta_{\pi}}
\newcommand{\Varpi}[1]{\mathrm{Var}_{\pi}\!\left[{#1} \right]}
\begin{document}
\title{Spatial Soft-Core Caching}

\author{
  \IEEEauthorblockN{Derya Malak and Muriel M\'{e}dard}
  \IEEEauthorblockA{Research Laboratory of Electronics, MIT, 
                    Cambridge, MA, USA\\
                    Email: \{deryam,\,medard\}@mit.edu}
  \and
  \IEEEauthorblockN{Edmund M. Yeh}
  \IEEEauthorblockA{Northeastern University, 
                    Boston, MA, USA\\
                    Email: eyeh@ece.neu.edu}
}

\maketitle

%%%%%%
\begin{abstract}
We propose a decentralized spatial soft-core cache placement (SSCC) policy for wireless networks. SSCC yields a spatially balanced sampling via negative dependence across caches, and can be tuned to satisfy cache size constraints with high probability. Given a desired cache hit probability, we compare the 95\% confidence intervals of the required cache sizes for independent placement, hard-core placement and SSCC  policies. We demonstrate that in terms of the required cache storage size, SSCC can provide up to more than 180\% and 100\% gains with respect to the independent and hard-core placement policies, respectively. SSCC can be used to enable proximity-based applications such as device-to-device communications and peer-to-peer networking as it promotes the item diversity and reciprocation among the nodes. 
\end{abstract}

%%%
\section{Introduction}
Distributed caching is a powerful technique to minimize the total average delay by replacing the backhaul capacity with storage capacity at small cells \cite{Shanmugam2013}, and to enable spectral reuse and throughput gain in networks \cite{MaddahAli2013Journal}. The goal of an efficient cache placement is to maximize the hit probability, i.e. the probability of obtaining the desired item from a neighboring cache. This is affected by the demand distribution, network topology, range of communication, and cache storage size.

Fundamental limits of caching have been studied in \cite{MaddahAli2013Journal}, in which the content placement phase is carefully designed so that a single coded multicast transmission can satisfy different demands. Capacity scaling laws have been explored in \cite{MaddahAli2013Journal}, and rate-memory and storage-latency tradeoffs have been studied in \cite{yu2018exact}. Caching has been studied in the context of device-to-device (D2D) communications in \cite{Ji2014}, and interference management in \cite{maddah2015cache}, and in optimization of cloud and edge processing for radio access networks in \cite{park2016joint}.

Temporal caching models have been analyzed in \cite{Che2002} for popular cache replacement algorithms, e.g. least recently used (LRU), least-frequently used, and most recently used cache update. Decentralized spatial LRU caching strategies have been developed in \cite{Giovanidis2016}. These combine the temporal and spatial aspects of caching, and approach the performance of centralized policies as the coverage increases. However, they are restricted to the LRU principle. A time-to-live (TTL) policy with a stochastic capacity constraint and low variance has been proposed in \cite{BerHenCiuSch2015}. The BitTorrent protocol employs the rarest first and choke algorithms to promote diversity of the pieces among peers, and foster reciprocation, respectively. These have been demonstrated in the context of peer-to-peer (P2P) file replication in the Internet \cite{LegKelMic2006}. A good piece replication algorithm should minimize the time spent in the transient state. 

There exist studies focusing on decentralized (geographic) content placement policies such as \cite{Shanmugam2013}, \cite{Blaszczyszyn2014}, \cite{IoannidisYeh2016}, \cite{Malak2016twc}. The main focus of the literature in this direction is to maximize the average cache hit probability subject to an average cache constraint. This optimization problem can be solved as a convex program. However, to the best of our knowledge, the related literature does not provide guarantees in terms of (a) how far-off the average cache size is from reality, (b) how far-off the average cache hit rate is from reality, and (c) how stable the cache hit probability across the caches. 

In the current paper, we provide a decentralized spatial soft-core cache placement (SSCC) policy. Since the cache storage size is finite, it is intuitive to have an exclusion range-based caching model such that the caches storing the same item are never closer to each other than some given distance (negative dependence), so as to promote diversity and reciprocation. SSCC roots in spatially balanced sampling, which is motivated by the request arrivals. For example, in P2P networking, the actual demand distribution is not known by nodes, and the cache updates in each peer are triggered by the requests. Furthermore, the traffic density in cellular networks is in general not uniform across the network, and the peak hour density can be approximated by a log-normal distribution \cite{zhou2015spatial}. Hence, instead of having a fixed exclusion range, it is desirable to have a variable exclusion range, depending on the popularity of the item. The SSCC policy come to the fore by putting a mark distribution on the exclusion range of an item based on its popularity. The marks may correspond to the detection ranges or the transmit powers of the nodes in heterogeneous network scenarios. Our objective is to address the issues (a)-(c) above in order to provide a better trade-off between the actual cache hit rate and the cache size violation probability. Our main contributions and use cases of SSCC are: 
\vspace{-0.1cm}
\begin{enumerate}[i.]
    \item SSCC has desirable properties: spatially balanced sampling across caches, concentration of the cache size, better cache over-provisioning, and multi-hop connectivity. 
    \item SSCC yields a better cache hit probability-cache violation probability tradeoff than the state of the art. In terms of the required cache storage size, SSCC can provide more than 180\% and 100\% gains with respect to independent placement \cite{Blaszczyszyn2014}, and hard-core placement \cite{Malak2016twc}, respectively.
    \item SSCC is suited for enabling proximity-based applications (D2D, P2P), and offloading mobile users in networks.  
    \item SSCC has connections with rarest first caching as it promotes the item diversity and reciprocation among the nodes. Hence, it can be well-suited for P2P applications.
\end{enumerate}

{\bf Notation.} 
Let $\Phi$ denote the mother point process (p.p.), and $\Phi_{th}$ be the child p.p. obtained via the thinning of $\Phi$. Let $\pi$ be a spatial caching policy that yields a set of child p.p.'s $\{\Phi_{th,i}\}_i$, where $\Phi_{th,i}$ is the set of retained points that cache item $i$. Let $A$ be a given bounded convex set in $\mathbb{R}^2$ containing the origin, and $rA$ be its dilation by the factor $r$. $\mathbbm{1}\{A\}$ is the indicator of event $A$. Let $B$ be a bounded Borel set. Let $\Phi(B)$ be the random number of points of the spatial p.p. $\Phi$ which lie in $B$. Any receiver can obtain the desired item $i$ if it is within a critical communication range $\Rdd$. Assume that $B=B_0(\Rdd)$, where $B_0(r)$ is a ball in $\mathbb{R}^2$ with radius $r$, centered at origin.

%%%
\section{How to Optimize the Caching Gain}
The locations of the nodes (caches) in the network are modeled by a homogeneous Poisson point process (PPP) $\Phi$ in $\mathbb{R}^2$ with intensity $\lambda$. There are $M$ items in the network, each having the same size, and each node has the same cache storage size $N<M$. Each user makes requests based on a Zipf popularity distribution over the set of the items. The probability mass function (pmf) of such requests (demand) is given by $p_r(i)=i^{-\gamma_r}\big/\sum\nolimits_{j=1}^M {j^{-\gamma_r}}$, where $\gamma_r$ determines the tilt of the Zipf distribution. The demand profile is the Independent Reference Model (IRM), i.e., the standard synthetic traffic model in which the request distribution does not change over time \cite{traverso2013temporal}. The request distribution is uniform across the network, i.e., isotropic, and does not change over time. Hence, the intensity of the requests for item $i$, i.e. $\lambda_i$, is proportional to its demand probability $p_r(i)$. Let $\mathcal{I}\sim p_r$ be the random variable that models the demand. Each node is associated with the variables $z_{xi}=\mathbbm{1}\{i\in {\rm Cache}(x)\}$ that denote whether item $i$ is available in its cache or not. There is also a cost $w_k$ associated with obtaining an item within the presence of $k$ nodes in the range. Given these parameters, consider the caching gain function of the following form: 
\begin{align}
\label{generalcachecost}
F(Z)=\mathbb{E}_{\mathcal{I}}\left[\sum\limits_{k=1}^{\infty}{w_{k}\Big(1-\prod\limits_{k'=1}^k\big(1-z_{p_{k'} I}\big)\Big)}\right],
\end{align}
where (\ref{generalcachecost}) can be used to model multi-hop coverage scenarios as in \cite{IoannidisYeh2016}, and Boolean Model coverage scenarios as in \cite{Blaszczyszyn2014}, \cite{Malak2016twc}. Let $\lambda_i=p_r(i)$, and $w_k=\mathbb{P}(\Phi(B)=k)$, which is the probability that $k$ caches (nodes of the original p.p. $\Phi$) cover the typical receiver, and $w_0$ is the probability of having no connection. Assume that $k^*$ is the first index such that a transmitter has the desired item $i$. Then, from (\ref{generalcachecost}), the caching gain for item $i$ is $\sum\nolimits_{k=k^*}^{\infty}{w_{k}}=\mathbb{P}(\Phi(B)\geq k^*)$, which is the same as probability of having at least $k^*$ transmitters. Equivalently, the cost of caching is $\sum\nolimits_{k=1}^{k^*-1}{w_{k}}$.

Since both the multi-hop and Boolean coverage scenarios are equivalent up to scaling, we focus on the second scenario. 

\begin{comment}
We have the following immediate observations: 
\begin{enumerate}[i.]
\item $F(Z)$ is an increasing function of $Z$.
\item The product term in (\ref{generalcachecost}) satisfies the following relation:
\begin{align}
\label{product_observation}
\!\!\!\!\!\!\!\!\!\!\prod\limits_{k'=1}^k\big(1-z_{p_{k'} i}\big)=\begin{cases}
1,\,\, z_{p_{k'} i}=0,\,\, \forall k'\in\{1,\hdots,k\},\\
0,\,\, \text{otherwise}.
\end{cases}
\end{align}
\item From (\ref{product_observation}), we get $\mathbb{E}\Big[\prod\limits_{k'=1}^k\big(1-z_{p_{k'} i}\big)\Big]
=\mathbb{P}(\Phi_{th,i}(B)=0)$.
\end{enumerate}
\end{comment}

We have the following immediate observation.
\begin{prop}\label{NA}
$F(Z)$ is convex if $z_{p_k' i}$'s are negatively associated (NA) \cite{Dubhashi1996p2} across $k'\in\{1,\hdots,k\}$, for all $i\in\{1,\hdots, M\}$. 
\end{prop}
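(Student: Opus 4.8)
The plan is to prove convexity by peeling off the convexity-preserving operations and reducing to a single coverage term, then using negative association (NA) to control the only terms that can destroy convexity. First I would write $F(Z)=\sum_{k\ge 1} w_k\,\mathbb{E}_{\mathcal{I}}\!\big[\,c_k(Z)\,\big]$ with $c_k(Z)\eqdef 1-\prod_{k'=1}^{k}\parenv{1-z_{p_{k'}I}}$ and $w_k=\mathbb{P}\parenv{\Phi(B)=k}\ge 0$. Because a nonnegative combination of convex functions and the expectation over the demand $\mathcal{I}$ both preserve convexity, it suffices to show that each expected coverage term $\mathbb{E}\!\big[c_k\big]$ is convex in the placement variables; this strips away the outer sum and the demand averaging.

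Second, I would use the combinatorial form of a coverage term. Since the $z_{p_{k'}I}$ are $\{0,1\}$-valued, $c_k=\max_{k'}z_{p_{k'}I}$ is the logical OR of the $k$ indicators, and taking expectations gives $\mathbb{E}[c_k]=1-\mathbb{P}\parenv{\Phi_{th,I}(B)=0}$, i.e. one minus the void probability of the thinned (item-$I$) process in $B$, matching the observation that $\mathbb{E}\big[\prod_{k'}(1-z_{p_{k'}I})\big]=\mathbb{P}\parenv{\Phi_{th,I}(B)=0}$. Writing this through inclusion--exclusion, $\mathbb{E}[c_k]=\sum_{\emptyset\neq S\subseteq\{1,\dots,k\}}(-1)^{|S|+1}\,\mathbb{E}\big[\prod_{k'\in S}z_{p_{k'}I}\big]$, the singletons contribute the affine term $\sum_{k'}\mathbb{E}[z_{p_{k'}I}]$, which is harmless for convexity; all curvature comes from the higher-order joint moments $\mathbb{E}\big[\prod_{k'\in S}z_{p_{k'}I}\big]$ with $|S|\ge 2$.

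Third --- the crux --- I would invoke NA to sign these interactions. NA applied to disjoint monotone blocks gives $\Cov{z_{p_{k'}I},z_{p_{l}I}}\le 0$ and, more generally, $\mathbb{E}\big[\prod_{k'\in S}z_{p_{k'}I}\big]\le \prod_{k'\in S}\mathbb{E}[z_{p_{k'}I}]$, so every joint moment is dominated by its independent (multilinear) value. The independent multilinear coverage $1-\prod_{k'}(1-q_{k'})$ with $q_{k'}\eqdef\mathbb{E}[z_{p_{k'}I}]$ has a Hessian with zero diagonal and nonpositive off-diagonal entries, and is therefore indefinite; my aim would be to show that the NA-induced shrinkage of the positive cross-correlations exactly offsets these off-diagonal entries, pushing the Hessian of $\mathbb{E}[c_k]$ to be positive semidefinite. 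Assembling over $k$ and $\mathcal{I}$ with the nonnegative weights $w_k$ then yields convexity of $F$.

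I expect the last step to be the main obstacle. NA only bounds the joint moments \emph{pointwise} (in value), whereas convexity is a statement about the \emph{curvature} of the maps $q\mapsto \mathbb{E}\big[\prod_{k'\in S}z_{p_{k'}I}\big]$; a pointwise bound does not by itself control second derivatives. Bridging this gap is the delicate part, and I would attack it by bringing in the specific SSCC construction: the soft-core thinning is designed so that the void probability $\mathbb{P}\parenv{\Phi_{th,I}(B)=0}$ is a concave function of the retention parameters (strong negative dependence drives the pairwise co-occurrences $\mathbb{E}[z_{p_{k'}I}z_{p_{l}I}]$ toward $0$, collapsing the offending bilinear terms), so that its complement $\mathbb{E}[c_k]$ is convex. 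In short, I would derive convexity not from NA in the abstract but from the quantitative negative dependence the SSCC policy imposes, using the abstract NA inequality only to fix the direction of each correction.
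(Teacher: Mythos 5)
There is a genuine gap, and it begins with the reading of the statement. What the paper actually proves (and the only thing it uses afterwards, as the sentence following the proposition makes explicit) is the Jensen-direction inequality $\mathbb{E}[F(Z)]\geq F(\mathbb{E}[Z])$: since each $1-z_{p_{k'}i}$ is a nonnegative nonincreasing function of a single coordinate, negative association gives directly $\mathbb{E}\big[\prod_{k'=1}^k(1-z_{p_{k'}i})\big]\leq \prod_{k'=1}^k\big(1-\mathbb{E}[z_{p_{k'}i}]\big)$; multiplying by $w_k\geq 0$, summing over $k$, and averaging over $\mathcal{I}\sim p_r$ finishes the proof in one line. You already possess this exact inequality --- it appears in your third paragraph, and it is precisely step $(a)$ of the paper's proof --- but you deploy it as an auxiliary ``sign fix'' inside a Hessian computation rather than as the entire argument. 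The proposition's ``convexity'' is shorthand for this comparison (an NA placement dominates the independent placement with the same marginals $\mathbb{E}[z_{p_{k'}i}]$), not a curvature claim about $F$ as a deterministic function of the placement variables.

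The curvature program you outline cannot be completed, and your own intermediate computation shows why. $F$ is multilinear in the $z$-variables, so its Hessian has zero diagonal; a symmetric matrix with zero diagonal is positive semidefinite only if it is the zero matrix, hence $1-\prod_{k'}(1-q_{k'})$ is not convex in $q$ for any $k\geq 2$ (you correctly observe it is indefinite, but no admissible correction can make it PSD). More fundamentally, NA is a property of the \emph{joint distribution} of the random vector $Z$; it cannot alter the second derivatives of a deterministic function, and the maps $q\mapsto\mathbb{E}\big[\prod_{k'\in S}z_{p_{k'}I}\big]$ you propose to differentiate are not even well defined, since joint moments are not determined by the marginals $q$. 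Your final fallback --- importing quantitative negative dependence from the SSCC construction --- is also circular for this proposition: it is stated for arbitrary NA placements, before SSCC is introduced, precisely in order to motivate restricting attention to NA policies. The repair is simple: discard the inclusion--exclusion and Hessian machinery, and promote your pointwise NA product-moment bound from a supporting role to the one and only step, which yields $\mathbb{E}[F(Z)]\geq F(\mathbb{E}[Z])$ exactly as in the paper.
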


\begin{proof}
Exploiting (\ref{generalcachecost}), we have the following relation: $\mathbb{E}[F(Z)]=\mathbb{E}_{\mathcal{I}}\left[\sum\nolimits_{k=0}^{\infty}w_k{\Big(1-\mathbb{E}\Big[\prod\nolimits_{k'=1}^k\big(1-z_{p_{k'} \mathcal{I}}\big)\Big]\Big)} \right]
\overset{(a)}{\geq} \mathbb{E}_{\mathcal{I}}\left[\sum\nolimits_{k=0}^{\infty}w_k{\Big(1-\prod\nolimits_{k'=1}^k\big(1-\mathbb{E}[z_{p_{k'} \mathcal{I}}]\big)\Big)} \right]=F(\mathbb{E}[Z])$, where $(a)$ is due to that $\mathbb{E}\Big[\prod\nolimits_{k'=1}^k\big(1-z_{p_{k'} i}\big)\Big]\leq\prod\nolimits_{k'=1}^k\big(1-\mathbb{E}[z_{p_{k'} i}]\big)$ as $z_{p_{k'} i}$'s are NA across $k'\in\{1,\hdots,k\}$, $\forall$ $i$. 
\end{proof}

From Prop. \ref{NA}, $\mathbb{E}[F(Z)]\geq F(\mathbb{E}[Z])$. The expected cache hit probability obtained via NA placement upper bounds the independent placement solution with probabilities $\mathbb{E}[z_{p_{k'} i}]$. NA has desirable properties in terms of sampling and concentration. Some important results that hold for independent variables, e.g., the Chernoff-Hoeffding bounds, and the Kolmogorov's inequality \cite{Dubhashi1996p2}, also hold for NA variables.

From Prop. \ref{NA}, it is clear that in terms of average cache hit performance, NA placement performs better than independent placement. Therefore, our main focus is on a class of placement policies that are NA. We also demonstrate that NA placement policies have lower variance across the nodes, hence are more stable than independent placement policies.

\begin{figure*}
    \centering
    \includegraphics[width=\textwidth]{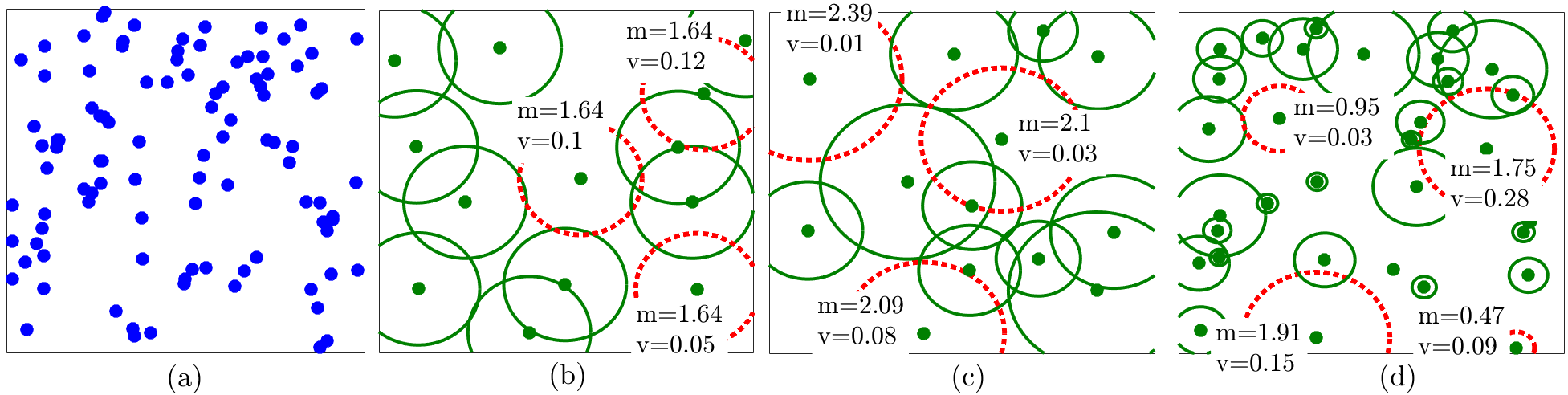}
    \caption{\small{SSCC p.p. realizations: (a) Begin with a realization of PPP $\Phi$. Associate a gamma distributed mark $m \sim \gamma(\alpha,\beta)$ to each point where $\bar{m}=\alpha\beta$ is the average mark and $\alpha\beta^2$ is its variance. As $\beta$ increases the mark variance increases. Associate a weight $v \sim  U[0, 1]$ to each node independently. A node $x\in\Phi$ is selected based on (\ref{retainingprobability}), where $p_0=1$, $f_c(r,m,n)$ as in (\ref{fcfunction}) with $c=100$. The marks and weights for some retained points are shown in dotted circles. The retained p.p.'s for (b) $\beta=0$ (fixed mark radii), (c) $\beta=0.1$ (mark radii have low variance), and (d) $\beta=1$ (mark radii have high variance). As $\beta$ increases, packing is denser, which is desired for spatially balanced caching.}}
    \label{fig:softcoremodel}
\end{figure*}

%%%
\section{A Soft-Core Caching Model}
The spatial soft-core caching (SSCC) policy is constructed from the underlying PPP $\Phi$ by removing certain nodes depending on the positions of the neighboring nodes, and on the marks and weights attached to them. It generalizes the Mat\'{e}rn II hard-core p.p. (MatII) such that there is a distinct distribution modeling the exclusion radius of each item.  

For each item $i$, let $\tilde{\Phi}_i=\{(x_k,m_k^{(i)},v_k^{(i)})\}_{k}$ be a homogeneous independently marked PPP with intensity $\lambda$, and i.i.d. $\mathbb{R}^2$-valued marks, where $\Phi=\{x_k\}$, and $\{(m_k^{(i)},v_k^{(i)})\}$ is the random bivariate mark. The first component $m^{(i)}$ of the bivariate mark is referred to as mark, and has distribution $\mu^{(i)}$. The mark of item $i$, i.e., $m^{(i)}$, denotes its exclusion radius, and depends on its popularity in the network. If item $i$ is more popular than item $j$, then $m^{(i)}$ is stochastically dominated\footnote{$X$ is stochastically dominated by $Y$, which is denoted by $X \leq^{st} Y$, if for all increasing functions $g$, we have $\mathbb{E}[g(X)]\leq \mathbb{E}[g(Y)]$.} by $m^{(j)}$. The second component $v^{(i)}$ of the bivariate mark is weight, which serves as a weight in the thinning procedure, and has distribution $\nu^{(i)}_{m^{(i)}}$ which might depend on $m^{(i)}$. 

Let $\Phi_{th,i}$ be a soft-core p.p. that denotes the set of points that cache item $i$. The cache placement model is such that item $i$ is stored in cache $x_k\in \Phi$ if and only if cache $x_k$ is kept as a point of $\Phi_{th,i}$. Equivalently, we have 
\begin{align}
z_{x_ki}=\mathbbm{1}\{i\in {\rm Cache}(x_k)\}=\mathbbm{1}\{x_k\in\Phi_{th,i}\}. 
\end{align}
Node $x_k$ is retained as a point of $\Phi_{th,i}$ with probability $\mathbb{E}[z_{x_ki}]=p(x_k,m_k^{(i)},v_k^{(i)},\Phi)$. The weights are i.i.d. and uniformly distributed, i.e. $v_k^{(i)}\sim U[0,1]$, for each node $x_k$ and item $i$. The marks $m_k^{(i)}$ are distributed according to $\mu^{(i)}$ for each $x_k$, and $i$. For the special case of MatII, i.e., when the marks are fixed, we optimized the exclusion radii in \cite{Malak2016twc}.

The number of items in cache $x_k$ is the sum of the individual items' indicator functions $C(x_k) = \sum\nolimits_i \mathbbm{1}\{i\in {\rm Cache}(x_k)\}$. The cache size constraint has to be satisfied on average, i.e.
\begin{align}
\label{avgcachesize}
N=\mathbb{E}[C(x_k)]=\sum\nolimits_i p(x_k,m_k^{(i)},v_k^{(i)},\Phi),\quad x_k\in\Phi.
\end{align}
We next detail the dependent thinning procedure, and investigate the relationship between $\Phi$ and $\Phi_{th,i}$, $i=\{1,\hdots, M\}$.

%%%
\subsection{Dependent Sampling of Nodes for Placement}
In this section and onwards, for brevity of notation, we omit the index $i$, and consider the generic thinned process $\Phi_{th}$, which is derived from $\Phi$ by applying the following probabilistic dependent thinning rule. Assume that mark $m$ has a distribution $\mu$, and $\vec{\delta}=\{m\}$ is the set of marks for all points in $\tilde{\Phi}$, where $m\sim \mu$ and $\bar{m}=\mathbb{E}_{m}[m]$. Assume that weight $\nu_m$ does not depend on the mark $m$. The marked point $(x, m, v) \in \tilde{\Phi}$ is retained as a point of $\Phi_{th}$ with probability
\begin{align}
\label{retainingprobability}
\!\!\!p(x, m, v, \Phi) = p_0\!\!\!\!\!\!\!\!\!\!\!\prod\limits_{(y,n,w)\in\Phi, y\neq x} \!\!\!\!\!\!\!\!\!\!\!\! [1 - \mathbbm{1}\{v \geq w\} f(||x - y||, m, n)]
\end{align}
independently from deleting or retaining other points of $\Phi$. In other words, a node $x \in \Phi$ is retained to cache item $i$ with probability $p_0$, if it has the lowest weight among all the points within its exclusion range. In (\ref{retainingprobability}), $p_0 \in (0, 1]$, $f : [0, \infty[ \times \mathbb{R}^2 \to [0, 1]$ is a deterministic function satisfying $f(\cdot, m, n) = f(\cdot, n, m)$ for all $m,\, n \in \mathbb{R}$. This means that if two points with marks $m$ and $n$, and weights $v\geq w$ are a distance $r > 0$ apart, then the point with weight $v$ is deleted by the other point with probability $f(r, m, n)$. Additionally, each surviving point is then again independently $p_0$-thinned. The function $f(||x - y||, m, n)$ in (\ref{retainingprobability}) should be determined according to (\ref{avgcachesize}). Inspired from  \cite{teichmann2013generalizations}, assume that $f$ satisfies 
\begin{comment}
\begin{align}
\label{fcfunction}
    f_c(r,m,n)=\begin{cases}
    1,\,\, 0\leq r\leq m+n,\\
    e^{-c(r-m-n)},\,\, r> m+n.
    \end{cases}
\end{align}
\end{comment}
\begin{align}
\label{fcfunction}
f_c(r,m,n)=\exp{(-c\lfloor r-m-n\rfloor_+)},\quad r\geq 0.
\end{align}

Denote by SSCC$[\lambda,\mu,(\nu_m)_{m\in\mathbb{R}}, p_0, f]$ the distribution of $\Phi_{th}$. We next give its intensity, i.e., $\lambda_{th}=\lambda \mathbb{E}[p(x, m, v, \Phi)]$.

\begin{theo}\cite[Theorem 12]{teichmann2013generalizations}
The intensity of the process $\Phi_{th}\sim$SSCC$[\lambda,\mu,(\nu_m)_{m\in\mathbb{R}}, p_0, f]$ is given by 
\begin{multline}
\lambda_{th}=\lambda p_0 \int\nolimits_{\mathbb{R}} \int\nolimits_{\mathbb{R}} \exp\Big(-\lambda \int\nolimits_{\mathbb{R}} F_{\nu_n}(w)  \nonumber\\ \int\nolimits_{\mathbb{R}^2}  f(||x ||, m, n){\rm d}x\ \mu({\rm d} n ) \Big)\, \nu_m({\rm d}w)\,\mu({\rm d}m),
\end{multline}
where $F_{\nu_m}$ is the cumulative distribution function of $\nu_m$. 
\end{theo}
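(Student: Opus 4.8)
The proof strategy starts from the identity stated just above the theorem, namely $\lambda_{th}=\lambda\,\mathbb{E}[p(x,m,v,\Phi)]$, and reads the expectation $\mathbb{E}[p(x,m,v,\Phi)]$ as the mean retention probability of a \emph{typical} marked point of $\tilde\Phi$. The plan is to make this precise through the Campbell--Mecke theorem: the intensity of a dependently thinned point process equals $\lambda$ times the Palm expectation, at a typical point, of the retention probability $p$. Because the ground process $\Phi$ is Poisson, I would then invoke Slivnyak's theorem to replace the reduced Palm distribution by the law of $\tilde\Phi$ itself. Placing the typical point at the origin $o$ with mark $m\sim\mu$ and weight $v\sim\nu_m$, and recalling that $\|x-y\|=\|y\|$ once $x=o$, this yields
\begin{equation*}
\lambda_{th}=\lambda p_0\,\mathbb{E}_{m\sim\mu}\,\mathbb{E}_{v\sim\nu_m}\Big[\mathbb{E}_{\tilde\Phi}\Big[\prod_{(y,n,w)\in\tilde\Phi}\big(1-\mathbbm{1}\{v\ge w\}\,f(\|y\|,m,n)\big)\Big]\Big],
\end{equation*}
where the innermost expectation is over an independent marked PPP of intensity $\lambda$.

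Next I would evaluate the inner expectation with the probability generating functional (PGFL) of the independently marked Poisson process. Writing $g(y,n,w)=1-\mathbbm{1}\{v\ge w\}\,f(\|y\|,m,n)$, the PGFL gives
\begin{equation*}
\mathbb{E}_{\tilde\Phi}\Big[\prod_{(y,n,w)\in\tilde\Phi} g(y,n,w)\Big]=\exp\Big(-\lambda\int_{\mathbb{R}^2}\int_{\mathbb{R}}\int_{\mathbb{R}}\big(1-g(y,n,w)\big)\,\nu_n(\mathrm{d}w)\,\mu(\mathrm{d}n)\,\mathrm{d}y\Big),
\end{equation*}
with $1-g(y,n,w)=\mathbbm{1}\{v\ge w\}\,f(\|y\|,m,n)$. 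Since a neighbour carrying mark $n$ has weight $w\sim\nu_n$ and $f$ does not depend on $w$, the innermost integral factors as $\int_{\mathbb{R}}\mathbbm{1}\{v\ge w\}\,\nu_n(\mathrm{d}w)=F_{\nu_n}(v)$, leaving the exponent $-\lambda\int_{\mathbb{R}}F_{\nu_n}(v)\int_{\mathbb{R}^2}f(\|y\|,m,n)\,\mathrm{d}y\,\mu(\mathrm{d}n)$ after exchanging the order of the position and mark integrals.

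Finally I would carry out the outer average over the typical point's mark and weight. Relabelling the typical weight $v$ as $w$, integrating against $\nu_m(\mathrm{d}w)\,\mu(\mathrm{d}m)$, and renaming the radial integration variable $y$ as $x$ reproduces the stated formula for $\lambda_{th}$ verbatim. The routine parts — the weight integral collapsing to the CDF $F_{\nu_n}$ and the position integral over $\mathbb{R}^2$ — are elementary. The step that demands the most care is the Palm reduction: one must justify through Slivnyak that, conditioned on the typical point, the rest of the configuration is again a marked PPP of intensity $\lambda$, and keep the bookkeeping straight between the typical point's weight (entering only through the upper limit $F_{\nu_n}(w)$) and the neighbours' weights (drawn from $\nu_n$ and integrated out). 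I would also verify that $f\le 1$ ensures $1-g\in[0,1]$, so that every integrand is a genuine thinning probability and the PGFL is legitimately applicable.
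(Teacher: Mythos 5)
Your proposal is correct and takes essentially the same route as the paper's own (very terse) proof: both start from $\lambda_{th}=\lambda\,\mathbb{E}[p(x,m,v,\Phi)]$, apply the PGFL of the Poisson process to the product in the retention probability, and collapse the neighbour-weight integral to the CDF $F_{\nu_n}$ before averaging over the typical point's mark and weight. You simply spell out the Mecke--Slivnyak justification and the bookkeeping that the paper leaves implicit, which is a faithful expansion rather than a different argument.
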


\begin{proof}
The probability generating functional (PGFL) \cite{Stoyan1996} of the PPP states for function $f(x)$ that $\mathbb{E}\left[\prod\nolimits_{x\in\Phi} f(x)\right] = \exp\big(-\lambda \int\nolimits_{\mathbb{R}^2} (1 - f(x)){\rm d}x\big)$. We obtain $\lambda_{th}$ using the PGFL and  $\mathbb{E}[\mathbbm{1}\{\nu_n \leq w\}]=\int\nolimits_{\mathbb{R}} F_{\nu_n}(w) \mu({\rm d}n)$, along with (\ref{retainingprobability}). \begin{comment}we compute the intensity of $\Phi_{th}$ as
\begin{align}
\lambda_{th}
=\!\lambda p_0 
\int\nolimits_{\mathbb{R}} \!\int\nolimits_{\mathbb{R}} e^{-\lambda \mathbb{E}\big[ \mathbbm{1}\{\nu_n\leq w\}  \int\nolimits_{\mathbb{R}^2}  f(||x ||, m, n){\rm d}x\ \big] }\, \nu_m({\rm d}w)\,\mu({\rm d}m).\nonumber
\end{align}
\end{comment}
\end{proof}
In Fig. \ref{fig:softcoremodel}, we plot different realizations of SSCC $\Phi_{th}$ formed by thinning $\Phi$. As the mark variance increases, the packing is denser, which is desired for spatially balanced caching.

%%%
\subsection{Spherical Contact Distribution Function}
Our goal in this section is to relate the cache hit probability distribution to the (spherical) contact distribution function. 

\begin{defi}
The spherical contact distribution function (SCDF) of the p.p. $\Xi$ is the conditional distribution function of the distance from a point chosen randomly outside $\Xi$ (i.e. $0$), to the nearest point of $\Xi$ given $0\notin \Xi$ \cite{Stoyan1996}. It is given by 
\begin{align}
\label{sphericalCDF}
H(r) =\mathbb{P}(\Rsp\leq r\vert \Rsp>0), \quad r \geq 0,
\end{align}
where $\Rsp=\inf\{s: \Xi\cap sA\neq \emptyset\}$, where $A=B_0(1)$, and $rA$ is the dilation of the set $A$ by the factor $r$. 
\end{defi}

As an example, in Fig. \ref{fig:CDF} we show the SCDF for the Boolean model with random spherical grains in \cite[Ch. 3.1]{BaccelliBook1}.

\begin{theo}\label{CachehitvsSCDF} The average cache hit probability of policy $\pi$ is
\begin{align}
\mathbb{E}_{\pi}[F(Z)]=\mathbb{E}_{\mathcal{I}}[\HpiI(\Rdd)],
\end{align}
where $\HpiI(\Rdd)$ is the SCDF of the thinned p.p. $\Phi_{th,\mathcal{I}}$ for $\mathcal{I}$.
\end{theo}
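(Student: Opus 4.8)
The plan is to unfold the definition of the caching gain (\ref{generalcachecost}), condition on how many caches fall in the ball $B=B_0(\Rdd)$, and then read off the resulting no-hit probability as the complement of the spherical contact distribution function of the thinned process $\Phi_{th,\mathcal{I}}$.

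First I would take the policy-$\pi$ expectation of (\ref{generalcachecost}) and exchange it with the demand average over $\mathcal{I}$, exactly as in the proof of Proposition \ref{NA}, to reach
\begin{align*}
\Ex{F(Z)}=\mathbb{E}_{\mathcal{I}}\Big[\sum_{k=0}^{\infty} w_k\big(1-\Ex{\textstyle\prod_{k'=1}^k(1-z_{p_{k'}\mathcal{I}})}\big)\Big],
\end{align*}
where extending the sum down to $k=0$ is harmless since the empty product equals $1$ and contributes a zero term.

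The central step is the interpretation of the inner product for a fixed item $i$. Because $\Phi_{th,i}\subseteq\Phi$ is a thinning, on the event $\{\Phi(B)=k\}$ the indices $p_1,\dots,p_k$ enumerate precisely the $\Phi$-points lying in $B$, and $\prod_{k'=1}^k(1-z_{p_{k'}i})=1$ exactly when none of them is retained in $\Phi_{th,i}$, i.e. when $\Phi_{th,i}(B)=0$. Since $w_k=\mathbb{P}(\Phi(B)=k)$, summing against the weights removes the conditioning,
\begin{align*}
\sum_{k=0}^{\infty} w_k\,\Ex{\textstyle\prod_{k'=1}^k(1-z_{p_{k'}i})}=\mathbb{P}(\Phi_{th,i}(B)=0),
\end{align*}
and using $\sum_k w_k=1$ gives $\Ex{F(Z)}=\mathbb{E}_{\mathcal{I}}[\mathbb{P}(\Phi_{th,\mathcal{I}}(B)\geq 1)]$; that is, the average cache-hit probability equals the probability that at least one retained cache for the requested item lies within $\Rdd$ of the typical receiver at the origin.

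It then remains to match $\mathbb{P}(\Phi_{th,i}(B)\geq 1)$ with $\Hpii(\Rdd)$. Specializing the SCDF definition (\ref{sphericalCDF}) to $\Xi=\Phi_{th,i}$ and $A=B_0(1)$, so that $\Rdd A=B$, the event $\{\Rsp\leq \Rdd\}=\{\Phi_{th,i}\cap \Rdd A\neq\emptyset\}$ coincides with $\{\Phi_{th,i}(B)\geq 1\}$, while the conditioning event $\{\Rsp>0\}=\{0\notin\Phi_{th,i}\}$ holds almost surely because the thinned PPP is simple and almost surely has no point at the origin. Hence $\Hpii(\Rdd)=\mathbb{P}(\Phi_{th,i}(B)\geq 1)$, and averaging over $\mathcal{I}\sim p_r$ yields the claim. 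I expect the only genuinely delicate point to be this central conditioning step: one must ensure the $p_{k'}$ really index the in-ball points of $\Phi$ and that the event that no covering node caches $i$ coincides with $\{\Phi_{th,i}(B)=0\}$ on $\{\Phi(B)=k\}$. Once this is established, both the summation and the contact-distribution reading are immediate.
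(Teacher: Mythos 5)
Your proposal is correct and follows essentially the same route as the paper: both reduce the caching gain to $\mathbb{E}_{\mathcal{I}}[\mathbb{P}(\Phi_{th,\mathcal{I}}(B)>0)]$ with $B=B_0(\Rdd)$ and then identify this no-miss probability with the spherical contact distribution function evaluated at $\Rdd$. The paper simply asserts $F(Z)=\sum_i p_r(i)\mathbbm{1}(\Phi_{th,i}(B)>0)$ outright, whereas you fill in the implicit intermediate step — conditioning on $\{\Phi(B)=k\}$ so the weights $w_k$ absorb the sum, and noting that $\{\Rsp>0\}=\{0\notin\Phi_{th,i}\}$ holds almost surely for the simple thinned process — which is a welcome tightening of the same argument, not a different one.
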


\begin{proof}
Let $B=B_0(\Rdd)$ and $\Phi_{th,i}(B)=\sum\nolimits_{x\in\Phi_{th,i}}1(x\in B)$ be the number of transmitters containing item $i$ within a circular region of radius $\Rdd$ around the origin. Then we have
\begin{align}
F(Z)=\sum\nolimits_{i}{p_r(i)\mathbbm{1}(\Phi_{th,i}(B)>0)}.\nonumber
\end{align}
The average cache hit probability is given by $\mathbb{E}[F(Z)]=\sum\nolimits_{i}{p_r(i)\mathbb{P}(\Phi_{th,i}(B)>0)}$, where defining $\Rsp=\inf\{s: \Phi_{th,i}(B_0(s)) \neq 0\}$, given $0\notin \Phi_{th,i}$ we have that
\begin{align}
\label{hit_cdf_equivalence}
    \mathbb{P}(\Phi_{th,i}(B)>0) = \mathbb{P}(\Rsp\leq \Rdd \vert \Rsp>0),
\end{align}
which is the SCDF of $\Phi_{th,i}$ evaluated at $\Rdd$. 
\end{proof}

The variance of $F(Z)$ across the nodes satisfies
\begin{align}
\Varpi{F(Z)}=\sum\nolimits_{i} p^2_r(i) \Hpii(\Rdd) (1-\Hpii(\Rdd)) 
\end{align}
since the spatial thinning processes across different items are independent of each other. Under the IRM and a Zipf popularity model, $\Varpi{F(Z)}$ decreases with increasing variance of marks when $\mathbb{E}[C(x)]$ is held constant. A spatially balanced sampling yields a low $\Varpi{F(Z)}$ as expected.

%%%
\subsection{Migration to the Child Process: Effective Thinning}
Consider the pair $\Phi-\Phi_{th}$ of mother and child p.p.'s. The spherical contact distance denotes the distance between a typical point in $\Phi$ and its nearest neighbor from $\Phi_{th}$.

Using (\ref{sphericalCDF}), the SCDF for the p.p. $\Phi$ can be written as:
\begin{align}
\label{HRintegral}
\Hpi(R)=1 - \exp\Big( -\int\nolimits_{0}^{R}2\pi r\lambda \etapi(r,\delta)  {\rm d}r \Big), 
\end{align}
where $\etapi(r,\delta)$ is the conditional thinning Palm-probability (CTPP), i.e. the probability of the point $x\in\Phi$ migrating to $\Phi_{th}$ under policy $\pi$, with a fixed (exclusion) radius $\delta$. It equals
\begin{align}
\label{ConditionalThinningPalm-Probability}
    \etapi(r,\delta)=\mathbb{P}(x\in \Phi_{th} \vert \Phi_{th}\cap B_{x_0}(r)=\emptyset, x_0\in \Phi).
\end{align}

\begin{remark}
An effective thinning policy yields a larger CTPP $\etapi(r,\delta)$. The more effective the thinning is, the larger (\ref{HRintegral}) is. From Theorem \ref{CachehitvsSCDF}, $\mathbb{E}_{\pi}[F(Z)]$ is improved if $\pi$ is more effective.
\end{remark}

We next compute the CTPP for the SSCC policy.

%%%
\begin{prop} \label{CTPP_SSCC}
The CTPP for PPP-SSCC is given as 
\begin{align}
    \etaGM(r,\vec{\delta})=
    \int\nolimits_{\mathbb{R}}\int\nolimits_{0}^1
    e^{-u \lambda \int\nolimits_{\mathbb{R}} \int\nolimits_{\mathbb{R}^2} h(||x||,m,n)
    {\rm d}x \, \mu({\rm d}n) }\, {\rm d}u\, \mu({\rm d}m),\nonumber
\end{align}
where given radius marks $m,\,n$, $h(||x||,m,n)$ satisfies the relation $\int\nolimits_{\mathbb{R}^2} h(||x||,m,n)\,{\rm d}x=\pi(m+n)^2-l_2(r,n)$, where $l_2(r,\delta)$ is the area of the intersection of $B_{x_0}(r)$ and $B_{x}(\delta)$. 
\begin{comment}
It is given by
\begin{align}
    l_2(r,\delta)=\begin{cases}
    \pi r^2,\quad 0<r<\frac{\delta}{2}\\
    r^2\cos^{-1}\big(1-\frac{\delta^2}{2r^2}\big)+\delta^2\cos^{-1}\big(\frac{\delta}{2r}\big)\nonumber\\
    -\frac{\delta}{2}\sqrt{4r^2-\delta^2},\quad r\geq \delta/2.
    \end{cases}
\end{align}
\end{comment} 
\end{prop}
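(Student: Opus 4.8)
The plan is to compute the conditional thinning Palm-probability $\etaGM(r,\vec{\delta})$ of (\ref{ConditionalThinningPalm-Probability}) directly, adapting the probability generating functional (PGFL) argument behind the intensity formula (Theorem 1) to a Palm-conditioned setting. First I would invoke Slivnyak's theorem for the PPP $\Phi$: conditioning on the typical point $x_0\in\Phi$ at the origin and on a candidate point $x$ at distance $r$ leaves the remaining points distributed as the original independently marked PPP of intensity $\lambda$. I would then condition on the bivariate mark $(m,v)$ of $x$, writing $v=u$ with $u\sim U[0,1]$ and $m\sim\mu$, so that (taking $p_0=1$ as in the statement) the event $\{x\in\Phi_{th}\}$ becomes $\prod_{(y,n,w)\in\Phi,\,y\neq x}[1-\mathbbm{1}\{u\geq w\}f(\|x-y\|,m,n)]$ from (\ref{retainingprobability}).

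Next I would express the conditional retention of $x$ given the void $\{\Phi_{th}\cap B_{x_0}(r)=\emptyset\}$ as an expectation over the remaining marked PPP and evaluate it with the PGFL $\mathbb{E}[\prod_y g(y)]=\exp(-\lambda\int_{\mathbb{R}^2}(1-\bar g(y))\,\mathrm{d}y)$, exactly as in the proof of Theorem 1. Averaging $\mathbbm{1}\{u\geq w\}$ over the uniform weight $w$ and the per-competitor kill probability $f(\|x-y\|,m,n)$ over $n\sim\mu$ produces, in the unconditioned case, the exponent $u\lambda\int_{\mathbb{R}}\int_{\mathbb{R}^2}f(\|z\|,m,n)\,\mathrm{d}z\,\mu(\mathrm{d}n)$, whose inner integral equals $\pi(m+n)^2$ in the hard-core limit of (\ref{fcfunction}). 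The role of the void conditioning is to replace $f$ by an effective kernel $h$: a competitor $y$ of $x$ falling in the lens $B_{x_0}(r)\cap B_x(n)$ can neither carry weight below $u$ (which would delete $x$, contradicting $x\in\Phi_{th}$) nor survive into $\Phi_{th}$ (forbidden by the void), so once we condition on $x\in\Phi_{th}$ such competitors are automatically consistent with the void, and I would argue they drop out of the effective exclusion. This yields $\int_{\mathbb{R}^2}h(\|z\|,m,n)\,\mathrm{d}z=\pi(m+n)^2-l_2(r,n)$. Integrating the resulting $\exp(-u\lambda\int_{\mathbb{R}}\int_{\mathbb{R}^2}h\,\mathrm{d}z\,\mu(\mathrm{d}n))$ over $u\in[0,1]$ and $m\sim\mu$ then gives the claimed expression.

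The final ingredient is purely geometric: $l_2(r,\delta)$ is the area of intersection of the void ball $B_{x_0}(r)$ and the ball $B_x(\delta)$ whose centres are a distance $\|x-x_0\|=r$ apart, which I would evaluate by the standard two-disk lens formula to obtain $\pi r^2$ when $r\leq\delta/2$ (so that $B_{x_0}(r)\subseteq B_x(\delta)$) and the arccosine expression otherwise.

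I expect the main obstacle to be justifying the exact form of the effective kernel $h$ under the void conditioning --- that is, showing the per-competitor exclusion is reduced by precisely the lens area $l_2(r,n)$ carried by the competitor's own mark $n$, rather than by the full interaction radius $m+n$. This requires disentangling the two coupled constraints, ``no lower-weight competitor of $x$'' and ``no survivor in $B_{x_0}(r)$,'' inside the ratio of Palm void probabilities, and checking that the soft-core kernel degrades to the same geometric reduction away from the hard-core limit. Confirming that the reference point $x_0$ itself, which lies in the void and is therefore deleted, does not perturb the exclusion integral is a further point I would need to verify.
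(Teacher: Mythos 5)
Your computational skeleton is, as far as one can tell, exactly what the paper intends: its entire proof is the one-line citation ``follows from generalizing \cite[Eq.~(15)]{al2016nearest}'', and the derivation behind that equation is precisely your route --- Slivnyak to plant $x_0$ and the candidate $x$ at distance $r$, conditioning on the bivariate mark $(m,u)$, evaluating the competitor product from (\ref{retainingprobability}) via the PPP PGFL, then integrating $e^{-u\lambda\int\int h}$ over $u\in[0,1]$ and $m\sim\mu$. Granted the effective kernel $h$, your calculation reproduces the stated formula, and your generalization from fixed $\delta$ to $\mu$-distributed marks is the one the paper performs.

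The genuine gap is the step you yourself flag as the main obstacle, and the argument you sketch to close it is circular. You drop lens competitors on the grounds that they ``cannot carry weight below $u$'' once one conditions on $x\in\Phi_{th}$ --- but $\mathbb{P}(x\in\Phi_{th}\mid \Phi_{th}\cap B_{x_0}(r)=\emptyset)$ in (\ref{ConditionalThinningPalm-Probability}) is exactly the quantity being computed, so you may not assume the event $\{x\in\Phi_{th}\}$ when deciding which competitors matter. Substantively, in Mat\'ern~II-type thinning a point $y$ in the lens with weight $w<u$ deletes $x$ regardless of whether $y$ itself survives into $\Phi_{th}$; the void only tells you $y\notin\Phi_{th}$, which biases $y$'s weight upward but does not force $w\geq u$. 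Moreover, conditioning on a void of the \emph{thinned} process (which is not Poisson) destroys the independence that your PGFL step requires: the conditional law of the marked PPP given $\Phi_{th}\cap B_{x_0}(r)=\emptyset$ is not an independently marked PPP, and no exact closed form for this conditional retention probability is known even in the fixed-mark case. The lens subtraction $\pi(m+n)^2-l_2(r,n)$ is a heuristic approximation inherited from the cited letter (where it is validated by simulation, not proved), so it cannot be derived exactly by Slivnyak plus PGFL as your plan requires. The related asymmetry you correctly notice --- the reduction being $l_2(r,n)$ rather than $l_2(r,m+n)$ --- remains unresolved in your sketch, which is a further sign that $h$ is posited as the natural generalization of the fixed-mark kernel rather than derived; an honest writeup should present the kernel (and hence the proposition) as an approximation generalizing \cite{al2016nearest}, which is all the paper itself claims.
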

\begin{proof}
The proof follows from generalizing \cite[Eq. (15)]{al2016nearest}.
\end{proof}

%%%
\begin{cor}
The CTPP for PPP-MatII is given as 
\begin{align}
    \etaM(r,\delta)
    =\frac{1-e^{-\lambda (\pi\delta^2-l_2(r,\delta))}}{\lambda (\pi\delta^2-l_2(r,\delta))}.\nonumber
\end{align}
\end{cor}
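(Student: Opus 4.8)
The plan is to obtain the corollary as a direct specialization of Proposition \ref{CTPP_SSCC}, recognizing that the Mat\'{e}rn II hard-core process is the degenerate, deterministic-radius instance of the PPP-SSCC model. The area integral appearing in that proposition, namely $\int_{\mathbb{R}^2} h(||x||,m,n)\,{\rm d}x=\pi(m+n)^2-l_2(r,n)$, is precisely the area of the hard-core exclusion disc net of its overlap with the conditioning ball $B_{x_0}(r)$; hence freezing the random mark to a single deterministic value should collapse $\etaGM$ to the classical MatII form. The goal is therefore to substitute a point-mass mark distribution and perform the resulting elementary integral.

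First I would take the mark law $\mu$ to be a Dirac measure concentrated at a single value, so that every node carries the same fixed exclusion radius. By the sifting property of the Dirac mass, both mark integrals $\mu({\rm d}m)$ and $\mu({\rm d}n)$ in the expression for $\etaGM(r,\vec{\delta})$ disappear, and the integrand is simply evaluated at the common deterministic mark. This leaves $\etaGM(r,\vec{\delta})=\int_0^1 e^{-u\lambda K}\,{\rm d}u$ with a single constant exponent $K$, where $K$ is the frozen value of the inner area integral $\int_{\mathbb{R}}\int_{\mathbb{R}^2} h(||x||,m,n)\,{\rm d}x\,\mu({\rm d}n)$. Using the area relation from the proposition together with the hard-core convention that identifies the pairwise interaction distance $m+n$ with the single MatII radius $\delta$, this constant becomes $K=\pi\delta^2-l_2(r,\delta)$: the term $\pi(m+n)^2$ collapses to the area $\pi\delta^2$ of the MatII exclusion disc, and the overlap term becomes $l_2(r,\delta)$.

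With $K=\pi\delta^2-l_2(r,\delta)$ fixed, the only remaining step is the one-dimensional integral
\[
\int_0^1 e^{-u\lambda K}\,{\rm d}u=\frac{1-e^{-\lambda K}}{\lambda K},
\]
which immediately yields $\etaM(r,\delta)=\frac{1-e^{-\lambda(\pi\delta^2-l_2(r,\delta))}}{\lambda(\pi\delta^2-l_2(r,\delta))}$, as claimed.

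I expect the only delicate point to be the bookkeeping of the exclusion-radius convention when passing from the bivariate marks $(m,n)$ of SSCC to the single radius $\delta$ of MatII. One must be careful that both the pairwise interaction distance $m+n$ and the overlap area $l_2$ are expressed consistently across the two models, so that the substitution reproduces exactly $\pi\delta^2-l_2(r,\delta)$ rather than a mismatched argument; in particular the overlap is taken against the full exclusion disc of radius $m+n=\delta$. Once this identification is pinned down, the remainder is a routine Dirac-mass specialization followed by a single elementary integral, so no further obstacles arise.
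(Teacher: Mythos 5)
Your proposal is correct and follows what is essentially the paper's own (implicit) route: the paper states the corollary without a separate proof, as the deterministic-mark specialization of Proposition~\ref{CTPP_SSCC} --- i.e., exactly the MatII formula of \cite[Eq.~(15)]{al2016nearest} --- and your Dirac-mass substitution followed by the elementary integral $\int_0^1 e^{-u\lambda K}\,{\rm d}u = \bigl(1-e^{-\lambda K}\bigr)/(\lambda K)$ is precisely that computation. The ``delicate point'' you flagged is genuine and worth recording: read literally, the overlap term in Proposition~\ref{CTPP_SSCC} is $l_2(r,n)$, so a point mass at $m=n=\delta/2$ (interaction distance $m+n=\delta$) would yield $\pi\delta^2-l_2(r,\delta/2)$, while a point mass at $m=n=\delta$ would yield $\pi(2\delta)^2-l_2(r,\delta)$; neither literal reading reproduces the stated $\pi\delta^2-l_2(r,\delta)$. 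Your convention --- taking the overlap of $B_{x_0}(r)$ against the full exclusion disc, i.e., reading the term as $l_2(r,m+n)$ --- is the one consistent with both the corollary and the cited source, so your bookkeeping in fact resolves a small notational inconsistency in the paper rather than introducing an error (the same slip resurfaces in the proof of Theorem~\ref{convexity}, where $g(\bar m)$ is identified with $\etaM(r,\bar{m})$ even though the displayed $q(\lambda,r,m)$ evaluated at deterministic marks gives $\lambda\bigl(4\pi\bar{m}^2-l_2(r,\bar{m})\bigr)$, not $\lambda\bigl(\pi\bar{m}^2-l_2(r,\bar{m})\bigr)$).
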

%%%%

%%%
The next Theorem shows that having a distribution on the marks yields a more effective thinning than MatII does.
\begin{theo}\label{convexity}
The CTPPs satisfy $\etaGM(r,\vec{\delta})\geq \etaM(r,\bar{m})$, where $\vec{\delta}=\{m\}$ is the set of marks in $\tilde{\Phi}$, with $\bar{m}=\mathbb{E}_{m}[m]$, . 
\end{theo}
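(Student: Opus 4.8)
The plan is to recognize MatII as the degenerate, deterministic-mark instance of SSCC and then argue that spreading the mark law about its mean can only increase the conditional thinning probability. First I would carry out the inner $u$-integral in Proposition~\ref{CTPP_SSCC}. Setting $\phi(t)\eqdef\int_0^1 e^{-ut}\,\mathrm{d}u=(1-e^{-t})/t$, the SSCC CTPP collapses to
\begin{align}
\etaGM(r,\vec{\delta})=\int_{\mathbb{R}}\phi\!\big(g(m)\big)\,\mu(\mathrm{d}m),\quad g(m)=\lambda\!\int_{\mathbb{R}}\!\big(\pi(m+n)^2-l_2(r,n)\big)\mu(\mathrm{d}n),\nonumber
\end{align}
while the Corollary writes $\etaM(r,\bar{m})$ as the \emph{same} scalar map $\phi$ evaluated at the MatII exclusion area, i.e.\ at the value $g(\cdot)$ produced by the deterministic mark law $\mu=\delta_{\bar m}$. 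The representation $\phi(t)=\int_0^1 e^{-ut}\,\mathrm{d}u$ is the engine of the proof: as a positive mixture of the convex, decreasing exponentials $t\mapsto e^{-ut}$, the map $\phi$ is convex and strictly decreasing on $[0,\infty)$, and these are the only two properties of $\phi$ I will need.

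Second, I would apply Jensen's inequality to the outer mark average. Convexity of $\phi$ gives $\etaGM(r,\vec{\delta})=\mathbb{E}_{m}[\phi(g(m))]\geq\phi\big(\mathbb{E}_{m}[g(m)]\big)$, so the claimed inequality reduces, by the monotonicity of $\phi$, to a comparison of \emph{averaged exclusion areas}: it suffices to show that the mean area $\mathbb{E}_{m}[g(m)]$ under the spread law $\mu$ does not exceed the exclusion area of MatII at the mean mark $\bar m$. Because $\phi$ is decreasing, this area domination converts directly into $\phi(\mathbb{E}_{m}[g(m)])\geq\etaM(r,\bar m)$, which proves the theorem. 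Equivalently, the whole statement is the assertion that the functional $\mu\mapsto\etaGM$ is minimized, among mark laws with prescribed mean $\bar m$, at the Dirac mass $\delta_{\bar m}$ that reproduces MatII.

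The main obstacle sits in this area comparison, and it is genuinely the delicate step rather than a routine calculation, because $\etaGM$ is not the expectation of a single fixed convex function: $g(m)$ itself depends on the whole law $\mu$ through the inner $n$-average. Expanding $\mathbb{E}_m[g(m)]$, the squared-radius term $(m+n)^2$ is jointly convex in the marks and, by a second Jensen step in $n$, contributes a \emph{variance surplus} proportional to $\mathrm{Var}_\mu(m)$ that works \emph{against} us (a larger exclusion disk lowers $\phi$); the overlap term $l_2(r,n)$ must therefore overcompensate. I would establish this by exploiting the convexity of the lens area $n\mapsto l_2(r,n)$, so that spreading $\mu$ raises the expected overlap $\mathbb{E}_n[l_2(r,n)]$ above $l_2(r,\bar m)$ by at least the variance surplus. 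The hard part is controlling $l_2$ uniformly in $r$: it is only piecewise smooth (constant $\pi r^2$ once a grain engulfs the contact ball, and a lens area otherwise), so I would split into these two regimes and bound the curvature of $l_2$ in the mark from below on the relevant range, thereby certifying that the gain in overlap dominates the variance penalty and that the Dirac mass is the extremal, least-effective thinning.
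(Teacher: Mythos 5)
Your opening matches the paper: you integrate out $u$ to get $\phi(t)=(1-e^{-t})/t$ and note it is decreasing and convex, which is exactly the function (there called $g$) whose convexity the paper's proof verifies via $g''=\frac{2-e^{-x}[x^2+2x+2]}{x^3}>0$. But the two arguments then part ways at the decisive step, and yours goes in the wrong direction. The paper keeps the mark randomness \emph{inside} the argument of $\phi$ and applies Jensen to the map $m\mapsto \phi(q(\lambda,r,m))$, so that the spread of the marks is converted directly into the gain: $\etaGM(r,\vec{\delta})=\Ex{\phi(q(\lambda,r,m))}\geq \phi(q(\lambda,r,\bar{m}))=\etaM(r,\bar{m})$. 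You instead apply Jensen to pull the expectation inside $\phi$, obtaining $\etaGM(r,\vec{\delta})\geq\phi(\mathbb{E}_m[g(m)])$. That inequality is valid but lossy: it discards precisely the convexity benefit that the theorem is about (a mean-preserving spread of the exponent \emph{increases} $\mathbb{E}[\phi]$ because $\phi$ is convex), and it leaves you needing the intermediate quantity $\phi(\mathbb{E}_m[g(m)])$ to dominate $\etaM(r,\bar{m})$ --- which is false in general.

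Concretely, since $\phi$ is decreasing, your reduction requires the averaged exclusion area to satisfy $\mathbb{E}_{m,n}\big[\pi(m+n)^2-l_2(r,n)\big]\leq \pi(2\bar{m})^2-l_2(r,\bar{m})$, i.e.\ $\mathbb{E}_n[l_2(r,n)]-l_2(r,\bar{m})\geq 2\pi\Var{m}$ for i.i.d.\ marks. This cannot hold uniformly: $l_2(r,\cdot)$ is nonnegative, nondecreasing and bounded above by $\pi r^2$ (it saturates once the grain covers the contact ball), so it is not globally convex in the mark and its curvature cannot be bounded below by a positive constant on the relevant range --- a bounded increasing function must eventually flatten. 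Your proposed curvature lower bound on the lens area is therefore unobtainable exactly where the spread matters. For a clean failure, let $r$ be small: then $0\leq l_2(r,n)\leq\pi r^2\approx 0$ for all $n$, the left side of the required inequality is essentially zero, while $2\pi\Var{m}>0$ for any nondegenerate $\mu$; hence $\phi(\mathbb{E}_m[g(m)])<\etaM(r,\bar{m})$ even though the theorem's conclusion still holds. The repair is to not interchange: apply Jensen at the level of $m\mapsto\phi(q(\lambda,r,m))$ as the paper does. (Fairness note: the paper's own write-up is itself loose here --- it checks convexity of $\phi$ in the exponent $x$ but not convexity of the composition with the quadratic $q(\lambda,r,m)$, and its identification $g(\bar{m})=\etaM(r,\bar{m})$ quietly replaces $\mathbb{E}[m_2^2]$ and $\mathbb{E}[l_2(r,m_2)]$ by their values at $\bar{m}$ --- but its route avoids the area-domination claim on which your plan founders.)
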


\begin{proof}
From Prop. \ref{CTPP_SSCC}, we have that
\begin{align}
\etaGM(r,\vec{\delta})=\int\nolimits_{\mathbb{R}}\int\nolimits_{0}^1
    e^{-u \lambda \int\nolimits_{\mathbb{R}} (\pi (m+n)^2-l_2(r,n)) \, 
    \mu({\rm d}n) }\, {\rm d}u\, \mu({\rm d}m) \nonumber\\
    =\mathbb{E}_{m}\Big[\mathbb{E}_{U}\Big[
    e^{-U q(\lambda,r,m)}\Big]\Big] 
    =\mathbb{E}_{m}\left[\frac{1-e^{-q(\lambda,r,m)}}{q(\lambda,r,m)}\right],\nonumber
\end{align}
where $U\sim U[0,1]$, and $q(\lambda,r,m)=\lambda\pi \big(m^2+2m\bar{m}_2\big)+\lambda \mathbb{E}_{m_2}\big[\pi m_2^2-l_2(r,m_2)\big]$. Let $f=e^{-x}$, $x=U\lambda\pi(m^2+2m\bar{m}_2)$. We have, $\etaGM(r,\vec{\delta})=\mathbb{E}_{m}[g(m)]$, with $g=\frac{f-1}{log(f)}=\frac{1-e^{-x}}{x}$. Then $g'=\frac{e^{-x}(x+1)-1}{x^2}$, $g''=\frac{2-e^{-x}[x^2+2x+2]}{x^3}>0$. Hence, $\etaGM(r,\vec{\delta})=\mathbb{E}_{m}[g(m)]\geq g(\bar{m})=\etaM(r,\bar{m})$.
\end{proof}

\begin{figure}[t!]
    \centering
    \includegraphics[width=0.3\textwidth]{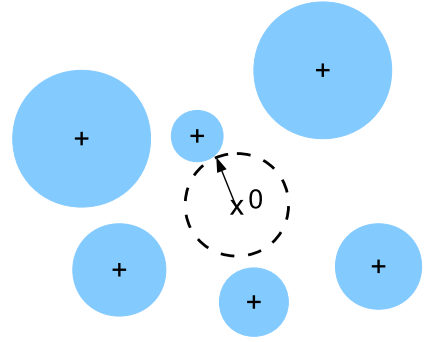}
    \caption{\small{The radius of the smallest sphere centered at $0$ and intersecting the Boolean Model $\Phi_{th}$. The SCDF is the conditional distribution function of the radius of the sphere, given $0\notin\Phi_{th}$ \cite[Ch. 3.1]{BaccelliBook1}.}}
    \label{fig:CDF}
\end{figure}

Exploiting Theorem \ref{convexity}, $\etaGM(r,\vec{\delta})$ can be improved using a mixture of marks. The variable exclusion range model can suit to the case of cellular networks where demand is not uniform across the network \cite{zhou2015spatial}, which we left as future work.

%%%%%%
\subsection{Cache Over-Utilization}\label{overutilization}
The cache placement requires $C(x)=\sum\nolimits_{i}\mathbbm{1}_{x\in\Phi_{th,i}}\leq N$, for all $x\in\Phi$, where $N$ is finite. The storage constraint is satisfied on average, i.e. $N=\mathbb{E}[C(x)]=\sum_i p(x,m_i,v,\Phi)$, $x\in\Phi$. However, the set of child p.p.'s $\{\Phi_{th,i}\}_i$, $i=1,\hdots, M$ might overlap. We need to make sure that the cache capacities are not over-utilized. Hence, the intersection of the sampled processes, i.e. $\cap_i \Phi_{th,i}$, should not include any $x\in\Phi$ more than $N$ times with high probability. We next provide an upper bound for the violation probability of the cache size for SSCC. 

\begin{figure*}[t!]
    \centering
    \includegraphics[width=0.49\textwidth]{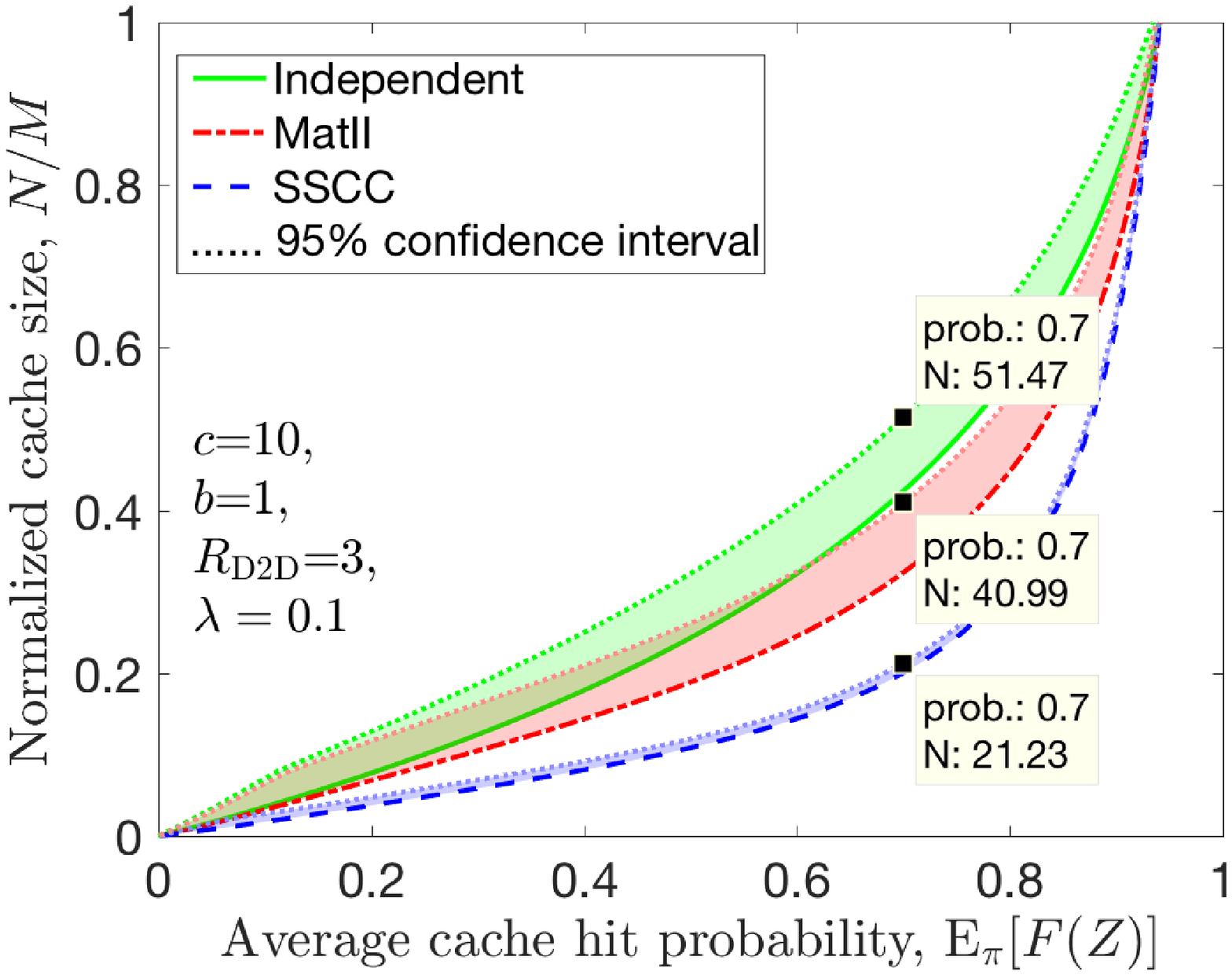}
    \includegraphics[width=0.49\textwidth]{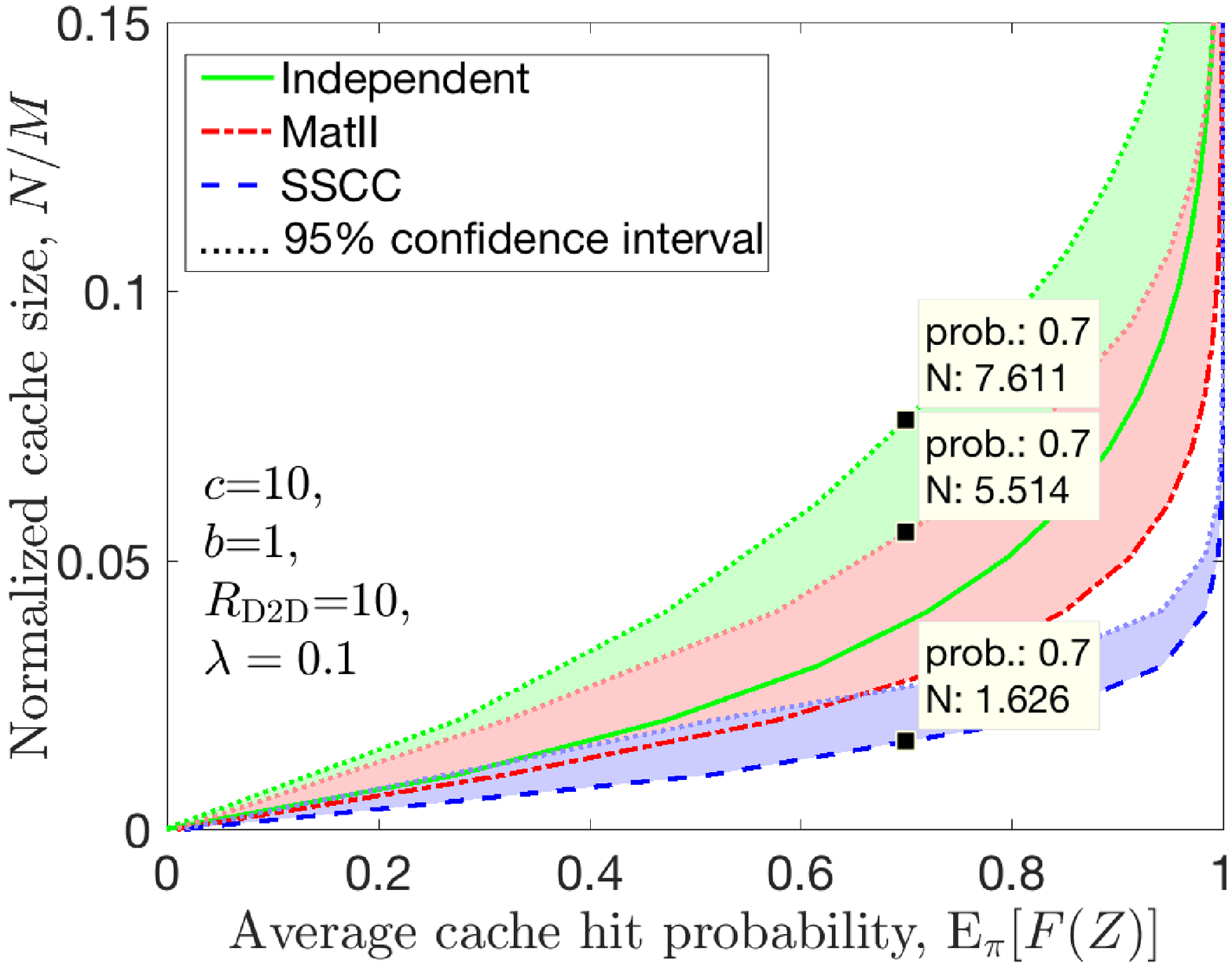}
    \caption{\small{The normalized cache size versus average cache hit rate for different placement policies. (Left) $\Rdd=3$, (Right) $\Rdd=10$.}}
    \label{fig:performance1}
\end{figure*}

\begin{comment}
\begin{prop}{\bf Chernoff bound for cache size.} 
The cache violation probability is upper bounded as
\begin{align}
    \mathbb{P}(C(x)>C)\leq \exp{\Big(\epsilon - (N+\epsilon)\log\Big(1+\frac{\epsilon}{N }\Big)\Big)},
\end{align}
where $C=N+\epsilon$ for $\epsilon$ arbitrarily small.
\end{prop}

\begin{proof} We can rewrite the cache violation probability as
\begin{align}
    &\mathbb{P}(C(x)>C)
    =\mathbb{P}\Big(\sum\limits_{i}\mathbbm{1}_{x\in\Phi_{th,i}}>C\Big)\nonumber\\
    &=\sum\limits_{A\subseteq S\,: |A|>C}\mathbb{P}(x\in\Phi_{th,i},\,i\in A,\, x\notin\Phi_{th,j},\,j\in A^{\mathsf{c}})\nonumber\\
    &\overset{(a)}{=}\sum\limits_{A\subseteq S\,: |A|>C}\,\,\prod\limits_{i\in A}\mathbb{P}(x\in\Phi_{th,i})\prod\limits_{j\in A^{\mathsf{c}}}(1-\mathbb{P}(x\in\Phi_{th,j}))\nonumber\\
    &\overset{(b)}{\leq} \exp{\Big(C-\sum\limits_i p(x,m_i,v,\Phi) - C\log\Big(\frac{C}{\sum\limits_i p(x,m_i,v,\Phi) }\Big)\Big)},\nonumber
\end{align}
where $C=N+\epsilon$ for $\epsilon$ arbitrarily small, $S$ is the set of all subsets of $\{1,\hdots, M\}$, and $A$ is a subset of $S$ and $A^{\mathsf{c}}$ is its complement, and $(a)$ is due to that $\mathbb{P}(x\in\Phi_{th,i},\,x\in\Phi_{th,j})=\mathbb{P}(x\in\Phi_{th,i})\mathbb{P}(x\in\Phi_{th,j})$ for all $i\neq j$ due to independent sampling of points for each $i$, and $(b)$ follows from employing the Chernoff bound.
\end{proof}

As the Chernoff bound does not capture the second-order characteristics of the sampling, we next provide another bound based on the Bernstein inequalities.
\end{comment}

\begin{prop}{\bf Bernstein bound for cache size.} 
The cache violation probability is upper bounded as
\begin{align}
\label{Bernstein}
\mathbb{P}(C(x)>C) \leq \exp\left( -\frac{(C-N)^2}{{\rm Var}[C(x)]+\frac{1}{3}(C-N)} \right),
\end{align}
where ${\rm Var}[C(x)]=\sum\nolimits_{i=1}^M{\rm Var}[z_{xi}]$ since the placement is independent across items, where ${\rm Var}[z_{xi}]=\mathbb{E}[z_{xi}^2]-\mathbb{E}[z_{xi}]^2=p(x,m_i,v,\Phi)(1-p(x,m_i,v,\Phi))$, for $i\in\{1,\hdots, M\}$, $x\in\Phi$.
\end{prop}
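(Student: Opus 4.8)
The plan is to view the cache occupancy $C(x)=\sum_{i=1}^{M} z_{xi}$ as a sum of bounded, independent random variables and apply a Bernstein-type concentration inequality to its upper tail. Two structural facts come first. The summands are independent across items $i$, which is exactly the property already invoked for the variance of $F(Z)$ (the thinned processes $\{\Phi_{th,i}\}_i$ are mutually independent); and $\mathbb{E}[C(x)]=N$, which is the average cache-size constraint (\ref{avgcachesize}). Writing $p_i=p(x,m_i,v,\Phi)=\mathbb{E}[z_{xi}]$, each $z_{xi}$ is Bernoulli, so $\mathrm{Var}[z_{xi}]=p_i(1-p_i)$ and, by independence, $\mathrm{Var}[C(x)]=\sum_i p_i(1-p_i)$, which is the variance appearing in (\ref{Bernstein}).

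Next I would center the summands. Put $Y_i=z_{xi}-p_i$, so that $\mathbb{E}[Y_i]=0$, $|Y_i|\le 1$ almost surely (as $z_{xi}\in\{0,1\}$ and $p_i\in[0,1]$), and $\mathbb{E}[Y_i^2]=\mathrm{Var}[z_{xi}]$. The violation event $\{C(x)>C\}$ then becomes the positive one-sided deviation $\{\sum_i Y_i > C-N\}$ of a zero-mean sum of independent variables, each bounded by $1$, with total second moment $v:=\mathrm{Var}[C(x)]$. Setting the deviation $t:=C-N>0$ casts the claim as a Bernstein tail bound for $\sum_i Y_i$.

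The engine is the Chernoff/exponential-Markov route underlying Bernstein's inequality. For $\theta>0$, independence gives $\mathbb{P}(\sum_i Y_i>t)\le e^{-\theta t}\prod_i \mathbb{E}[e^{\theta Y_i}]$. Each factor is controlled by the standard moment-generating-function estimate for a mean-zero variable bounded above by $1$, namely $\mathbb{E}[e^{\theta Y_i}]\le \exp(\mathbb{E}[Y_i^2](e^{\theta}-1-\theta))$, or its relaxed form valid for $0<\theta<3$. Multiplying over $i$ replaces $\sum_i \mathbb{E}[Y_i^2]$ by $v$, and minimizing the exponent (namely $-\theta t$ plus the aggregated log-MGF) over $\theta$, with the near-optimal choice $\theta\propto t/(v+t/3)$, yields a Bernstein tail bound in which $v$ and the linear term $\tfrac{1}{3}t$ populate the denominator as in (\ref{Bernstein}).

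The step I expect to be the crux is precisely this moment-generating-function bound together with the $\theta$-optimization: the coefficient $\tfrac{1}{3}$ and the numerical factor multiplying $(C-N)^2$ are produced entirely by that estimate and the minimization, and they hinge on the boundedness constant $b=1$ of the centered indicators, so this is where one must track constants carefully. Everything else, the independence across items, the identification $t=C-N$, and the additive variance decomposition, is direct bookkeeping from the model. I would stress that independence across items is indispensable here: it is what makes the product of moment generating functions legitimate and what yields the additive variance $\sum_i p_i(1-p_i)$; under positive dependence among the $z_{xi}$ the denominator would have to carry the true variance of the sum rather than the sum of the per-item variances.
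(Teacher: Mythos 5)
Your route is exactly the paper's route: the paper's entire proof is the one-line observation that the indicators $z_{xi}=\mathbbm{1}\{x\in\Phi_{th,i}\}$ are independent $0$--$1$ random variables across $i$ with mean sum $N$, followed by an invocation of Bernstein's inequality; your centering $Y_i=z_{xi}-p_i$, the Chernoff/MGF estimate, and the $\theta$-optimization simply unfold that citation, and your identification of independence across items and the additive variance $\sum_i p_i(1-p_i)$ matches the proposition's statement verbatim.

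One caveat, at precisely the step you flagged as the crux. Carried through with the standard constants ($Y_i\leq 1$ a.s., $\mathbb{E}[e^{\theta Y_i}]\leq\exp\!\big(\mathbb{E}[Y_i^2](e^{\theta}-\theta-1)\big)$, near-optimal $\theta$), the computation you sketch yields
$\mathbb{P}(C(x)>C)\leq\exp\!\left(-\frac{(C-N)^2}{2\left(\mathrm{Var}[C(x)]+\frac{1}{3}(C-N)\right)}\right)$,
i.e.\ with a factor $2$ in the denominator (equivalently $(C-N)^2/2$ in the numerator) that (\ref{Bernstein}) as printed omits. The printed bound is therefore strictly stronger than classical Bernstein and does not follow from the MGF route; in fact, as a general claim about sums of independent Bernoullis it can fail: for $C(x)\sim\mathrm{Binomial}(M,1/2)$ and $C-N=\sqrt{M}$, the left side tends to $\Q(2)\approx 0.023$ while the right side of (\ref{Bernstein}) tends to $e^{-4}\approx 0.018$. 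This discrepancy is a defect of the paper's statement, inherited by its one-line proof, rather than of your argument --- what your derivation honestly establishes is the correct standard form with the extra $1/2$, and since you explicitly warned that the numerical factor multiplying $(C-N)^2$ must be tracked, that is the version you should state as the conclusion. The qualitative message of the proposition (the violation probability decays exponentially in $(C-N)^2$ when $\mathrm{Var}[C(x)]$ is small) is unaffected.
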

\begin{proof}
It follows from employing Bernstein inequality since $\mathbbm{1}_{x\in\Phi_{th,i}}$ are independent $1$-$0$ random variables across $i$.
\end{proof}

As ${\rm Var}[C(x)]$ drops, the bound in (\ref{Bernstein}) becomes lower. Hence, the cache violation probability is negligible if the cache placement strategy has very low-variance. In Sect. \ref{experiments}, we demonstrate that SSCC has very small violation probability.

For the spatially independent placement policy in \cite{Blaszczyszyn2014}, where nodes are sampled i.i.d., authors have proposed a probabilistic placement technique to guarantee that the cache constraint is satisfied with equality. However, in SSCC, nodes are not sampled independently. Because the placement policy is NA across the nodes, it is nontrivial to design probabilistic placement techniques to satisfy the cache size constraint. In this section, we discuss how to bound the violation probability, and demonstrate in Sect. \ref{experiments} that for SSCC the cache violation probability can be made negligibly small.

%%%
\section{Numerical Simulations}
\label{experiments}
The nodes live in a square region of the Euclidean plane with area $L^2$ where $L=100$. To avoid edge effects, we evaluate the performance only for the middle square region with area $L^2/9$. The network parameters are $\lambda=0.1$ and $\Rdd\in\{3,\,10\}$. The request process is isotropic and Zipf distributed with parameter $\gamma_r=0.1$ over $M=100$ items. 

For MatII, there is a fixed exclusion range for a given item, and we have derived the optimal exclusion radii in \cite{Malak2016twc}. Let $r_i$ be the optimized exclusion range for item $i$ for MatII. For SSCC, we assume that the marks $m^{(i)}$ for item $i$ (exclusion radii) are distributed according to a gamma distribution $\mu^{(i)}=\Gamma(0.7 r_i,1)$ for each $x\in\Phi$, and all items $i$, where we choose its parameters such that the average value of the radius mark for item $i$ equals $\bar{m}^{(i)}=0.7 r_i$. Hence, $\Phi_{th}\sim$SSCC$[0.1,\Gamma(0.7 r_i,1),U[0,1],1,f_{10}]$. We can observe that the SSCC model can be used to optimize the cache hit probability-cache violation probability tradeoff. As variance of exclusion range increases, the violation probability might also increase for a desired cache hit probability. Note that we do not optimize the distributions of the marks $\mu^{(i)}$ across all $i$ over a class of distributions. We leave the study of the fundamental performance limits of SSCC as future work.   

We numerically investigate how much cache over-provisioning is required for different spatial cache placement policies: spatially independent \cite{Blaszczyszyn2014}, MatII \cite{Malak2016twc}, and SSCC cache placement. In Fig. \ref{fig:performance1}, we investigate the required cache size $N$ (normalized) of each policy given that the probability of cache violation is small such that $\mathbb{P}[|C(x)-N|\leq \epsilon] > 0.95$ in order to characterize the required cache size for a given average cache hit probability. We also illustrate the $95\%$ confidence intervals represented by the shaded regions, and mark the cache sizes for different policies when the average cache hit probability is $\mathbb{E}_{\pi}[F(Z)]=0.7$. For example, when $\Rdd=3$, for the $95\%$ confidence interval, the excess cache ratio for independent placement in \cite{Blaszczyszyn2014}, and MatII placement in \cite{Malak2016twc} with respect to the SSCC policy is $142\%$, and $93\%$, respectively. When we have $\Rdd=10$, the respective excess ratios for the independent and MatII placement policies are $188\%$, and $109\%$, which are illustrated on the plots. SSCC yields a better concentration of the required cache size, which is desired. Hence, policies like SSCC can be exploited so that the cache does not overrun or underrun its capacity constraint. 

SSCC gives insights into not only how to cache the content, but also how to effectively sample in spatial settings. SSCC is suited for enabling applications such as D2D and P2P as it promotes the item diversity and reciprocation. Extensions include the incorporation of the spatial variation of the demand. They also include employing the exclusion based models to optimize the performance of time-to-live (TTL) caches.

\section*{Acknowledgment}
We thank Salman Salamatian for helpful discussions.

\bibliographystyle{IEEEtran}
\bibliography{references}

\end{document}